\newtheorem{lemma}{Lemma}
\newtheorem{theorem}{Theorem}
\begin{document}
\title{Achieving Ultra-Reliable Communication via CRAN-Enabled Diversity Schemes}
\author{Binod Kharel, Onel L. Alcaraz L\' opez, Hirley Alves, Matti Latva-aho\\
	\IEEEauthorblockA{Centre for Wireless Communications (CWC), University of Oulu, Finland}
	%
	%
	\{binod.kharel, onel.alcarazlopez, hirley.alves, matti.latva-aho\}@oulu.fi
}
%
\maketitle

\begin{abstract}
	Internet of things is in progress to build a smart society, and wireless networks are critical enablers for many of its use cases. In this paper, we present a multi-coordinated transmission scheme to achieve ultra-reliability for critical machine-type wireless communication networks. We take advantage of diversity, which is fundamental for dealing with fading channel impairments, and for achieving ultra-reliable region of operation in order of five 9's as defined by 3GPP standardization bodies. We evaluate an interference-limited network composed of multiple remote radio heads that are allowed to cooperate, by keeping silence thus reducing interference, or by performing more elaborated strategies such as maximal ratio transmission, in order to serve a user equipment with ultra-reliability. We provide extensive numerical analysis and discuss the gains of cooperation by the centralized radio access network.
\end{abstract}


\section{Introduction}

 Fifth-generation mobile networks (5G) will address not only the evolutionary aspects of higher data rates but also the revolutionary aspect of use cases such as massive machine type communication (mMTC) and ultra-reliable low latency communication (URLLC) which poses diverse and stringent requirements \cite{popvoski1}. In the context of URLLC such as factory automation or vehicular communication, which require an extremely high reliability (e.g., frame error rates of $\!10^{-9}\!$ or $\!10^{-5}\!$, respectively) while providing end-to-end delay of 1ms \cite{fettweis1}. Moreover, different requirements face major challenge of reducing latency while providing higher reliability services in the Radio Access Network (RAN) of 5G New Radio (NR), as well as coexistence with different service categories such as enhanced mobile broadband (eMBB). Some of the major challenges and key component issues related to ultra-reliability are  enhanced control channel reliability, link adaptation, interference mitigation. Also, notice that coexistence with other services such as eMBB is related with pre-emptive scheduling and link adaptation \cite{achieving}. These metrics make physical layer design of URLLC very complicated \cite{physical}.

 Multi-connectivity with signal-to-noise ratio (SNR) gain at receiver side using schemes like joint decoding, selection combining and maximal ratio combining (MRC) are discussed in \cite{fettweis2}. Use of various diversity sources, packet design and access protocols as key component supporting URLLC in wireless communication system are discussed in \cite{popvoski2}. Authors in \cite{popvoski3} discussed the way to offer URLLC without intervention in physical layer design by using interface diversity and integrating multiple communication interfaces, where each interface is based on different technology. An energy efficient power allocation strategy for the Chase Combining Hybrid Automatic Repeat Request (CC-HARQ) using finite block-length to achieve ultra-reliable communication are discussed in \cite{shehab1}.
 Authors in \cite{onel1} investigate cooperative communications via relaying protocols to meet ultra-reliable communication (URC) as feasible alternative to typical direct communications framework. In\cite{onel4}, authors focus in the problem of downlink cellular networks with Rayleigh fading and stringent reliability constraint by using topological characteristics of the scenario, and show that  ultra-reliable region is attained by using multiple antennas at the receiving User Equipment (UE) for finite and infinite blocklength coding.
 
Different from \cite{onel4} our work is mainly based on achieving the ultra-reliable region of operation by means of cooperation/coordination of Remote Radio Heads ($\mathrm{RRHs}$) for downlink transmission. Specifically, we consider silencing scheme, where some part of the interfering $\mathrm{RRHs}$ remains in on state and the other are in off, and the Maximum Ratio Transmission (MRT) different from MRC in \cite{onel4} where we allow diversity at transmission side rather at receiving side providing significant diversity gain to cope with very stringent reliability constraints. The multi-connectivity scenario and basis of our system model assumption is mainly based on \cite{fettweis2}. Further, ensuring high reliability using multiple node redundant transmission is also included in the study of enhancements for URLLC support in the 5G Core network in 3GPP (Release 16) \cite{3gpp1}. The system model is CRAN enabled architecture and the main benefit of CRAN architecture is that the signal processing tasks of each small-cell base station (BS) are migrated to Base Band Unit (BBU) pool while enabling coordinated multipoint transmission, centralized resource allocation, joint user scheduling and data flow control \cite{cran1}. The main contribution of this work can be listed as follows: 

\begin{itemize}
	\item we attain accurate closed-form approximations to the outage probabilities when the UE operates under full interference, silencing (mitigating interference by silencing some $\mathrm{RRHs}$) and MRT schemes;
	\item we address the rate control problem constrained on target reliability constraints for the proposed schemes;
	\item numerical results show the superiority of the MRT scheme and the feasibility of ultra-reliable operation when the number of $\mathrm{RRHs}$  increases;
	\item we show that our analytical results are valid by corroborating them via Monte Carlo simulations.
\end{itemize}
Next, Section \ref{sc:system model} introduces the system model and assumptions. Section \ref{sc:reliability diversity} presents the diversity and reliability formulation strategy, while Section \ref{sc:numerical results} shows numerical analysis and rate control under reliability constraints. Finally, Section \ref{sc:conclusion} concludes the paper.

$\textbf{Notation}$: $X\!\sim$$\mathrm{Exp(1)}$ is a normalized exponential random variable with Cumulative Distribution Function (CDF) $F_X(\textit{x})\!\!\!=\!\!\!1-e^{-\textit{x}}$, while $\!Y\!$ is gamma random variable $\Gamma(n,\tfrac{1}{n})$ with the Probability Density Function (PDF) $\!f_Y(\textit{y})\!=\tfrac{\textit{y}^{n-1}\exp^{-\textit{y}}}{(n-1)!}$. Also, $\Gamma(p,x)$=$\int_x^{\infty}t^{p-1}e^{-t}dt$ is the incomplete gamma function, and $_2F_1(a,b;c;z)$ denote the Gaussian regularized hypergeometric function\cite{mathematica}.

\section{System model} \label{sc:system model} 
Consider a multi-node downlink cellular network in which there are $\eta+1$ $\mathrm{RRHs}$ spatially distributed in a given area $\mathcal{A}\subseteq\mathbb{R}^2$. In the topology, there is a typical link ($\mathrm{RRH_0}$) which is assumed to be close to UE and other distributed $\mathrm{RRHs}$ which are equidistant with UE\footnote{Notice that in real word setups the UE could be at any location in a given time. In this work, we have relaxed this by assuming equal distances to interfering nodes for analytical tractability and getting closed-form solutions such that some insights can be discussed.}. The links are further connected to cloud networks where BBU is present by wireless or fixed line connections. The CRAN is enabled with computation and storage units enabling edge computing as shown in Fig.1. We assume all other $\mathrm{RRH}_j,\ j=1,\dots,\eta$ are using the same channel to transmit data to their corresponding user equipment UE. Here, we refer to the typical link between UE and $\mathrm{RRH}_0$ as typical link\footnote{Notice that we focus the analysis on the reference user only.} with length $d_0$ while the  distance between each $\mathrm{RRH}_j $ and UE is denoted by $d_j$, $j=1,\dots,\eta$. We assume channel undergoes quasi-static Rayleigh fading and path loss exponent is denoted by $\alpha$. We focus on the analysis of the typical link's performance when the remaining $\mathrm{RRHs}$ are:
\begin{itemize}
	\item not cooperating (thus, not edge computing or CRAN-enabled).
	\item cooperating through the CRAN\footnote{BBU at CRAN enables coordinated multipoint transmission similar to scenario described in \cite{physical},\cite{fettweis2} and \cite{cran1}.}.
\end{itemize}

\begin{figure}[ht!]
	\centering  
	\subfigure{\includegraphics[width=0.47\textwidth]{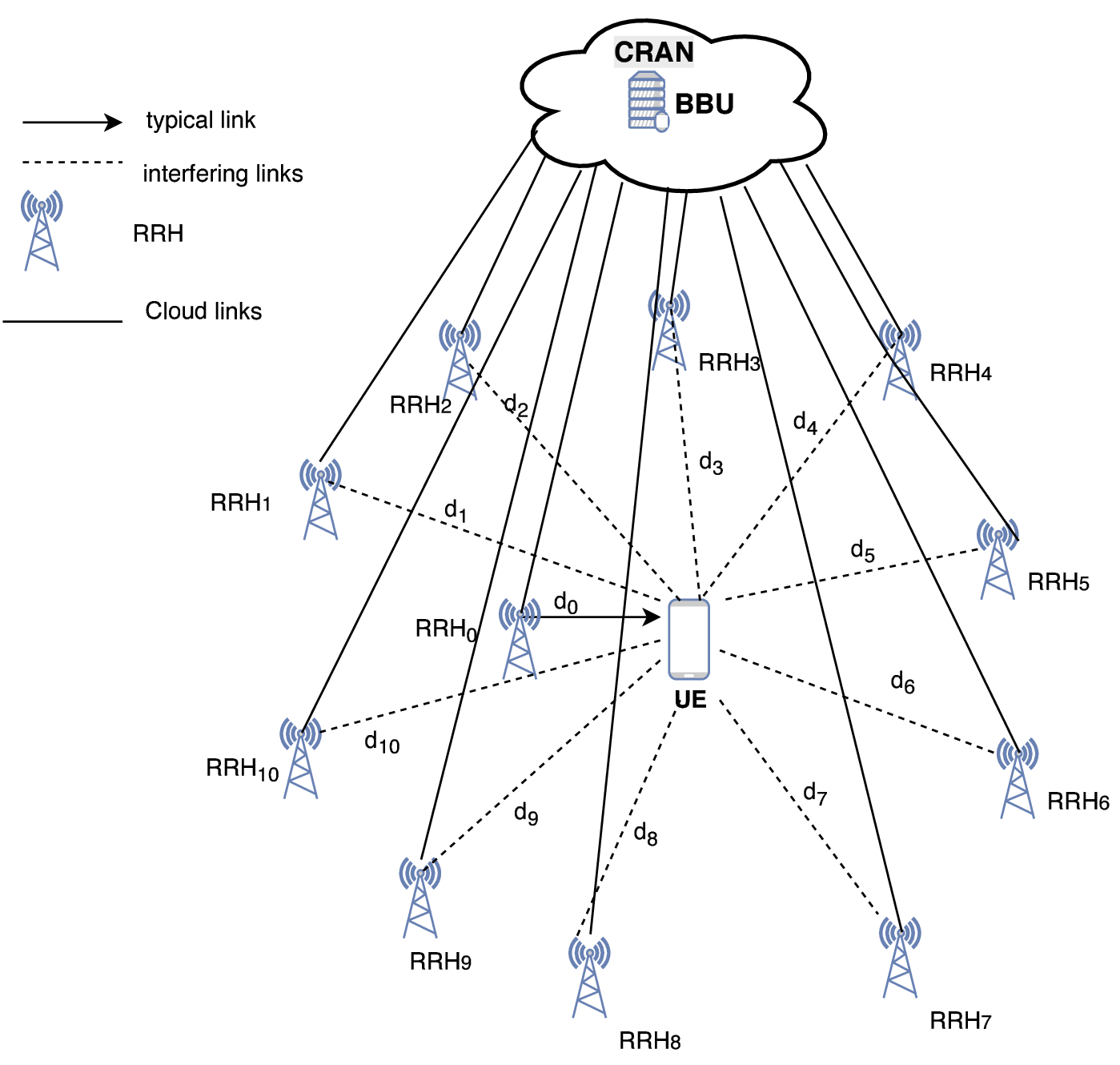}}
	\caption{Illustration of the system model with $\eta=10$.}	
	\label{fig1}
	\vspace*{-2mm}
\end{figure}

Furthermore, UE is equipped with single antenna and we assume that the fading realizations can be treated as independent events and gains from spatial diversity can be fully attained.
 
Consider that each $\mathrm{RRH}$ transmits with fixed unit power and there is a dense network deployment such that system is interference limited, therefore the impact of noise can be neglected. Under these settings, Signal-to-Interference Ratio at UE is given by
\begin{align}\label{eq1}
\mathrm{SIR}=\frac{h_0d_0^{-\alpha}}{\ I_j}
\end{align}
 with, $I_j=\sum_{k+1}^{\eta}h_j d_j^{-\alpha}$ where $k$ is the number of cooperating $\mathrm{RRHs}$ and $I_j$ is the interference from the other remaining $\mathrm{RRHs}$. We denote the squared-envelope coefficients of the typical link and other $\mathrm{RRHs}$ as $h_0$, $h_j\sim\mathrm{Exp}(1)$, respectively. Under these assumptions, we derive the closed-form expression for the outage probabilities under each transmission scheme. The analytical results are corroborated via Monte Carlo simulations and discussed in \ref{sc:numerical results}.

\section{Diversity and Reliability}\label{sc:reliability diversity} 
Herein, we consider the typical link experiences interference from all neighbouring $\mathrm{RRHs}$ due to lack of coordination or of backhaul infrastructure for enabling the CRAN. The CDF of $\mathrm{SIR}$ is $\mathbb{P}(\mathrm{SIR}<\theta)=F_\mathrm{SIR}(\theta)$ and can be formulated for the different schemes in consideration when CRAN performs different strategies to serve UE. Note that threshold is $\theta=2^r-1$ \cite{onel4}, where $r$ is the transmission rate whereas, $\delta=\tfrac{d_0}{d_j}$.

\begin{theorem}
	The CDF of the $\mathrm{SIR}$ when silencing $\mathrm{RRHs}$ by limiting interference at UE side is
	\begin{align}\label{eq2}
	F_\mathrm{SIR}^S(\theta)=1-\Big(1+\delta^{\alpha}\theta\Big)^{-(\eta-k)},
	\end{align}
	with $k=0$ representing the case of full interference.
\end{theorem}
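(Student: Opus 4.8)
The plan is to evaluate $F_\mathrm{SIR}^S(\theta)=\mathbb{P}(\mathrm{SIR}<\theta)$ directly by conditioning on the aggregate interference $I_j$ and then averaging it out. Starting from \eqref{eq1}, the outage event $\mathrm{SIR}<\theta$ is equivalent to $h_0 d_0^{-\alpha}<\theta I_j$, i.e.\ $h_0<\theta d_0^{\alpha}I_j$. Since $h_0\sim\mathrm{Exp}(1)$ is independent of every interferer, conditioning on $I_j$ and invoking the exponential CDF $1-e^{-x}$ gives $\mathbb{P}(h_0<\theta d_0^{\alpha}I_j\mid I_j)=1-e^{-\theta d_0^{\alpha}I_j}$.

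First I would remove the conditioning by averaging over $I_j$, which turns the problem into the evaluation of a Laplace transform,
\begin{align}
F_\mathrm{SIR}^S(\theta)=1-\mathbb{E}_{I_j}\!\left[e^{-\theta d_0^{\alpha}I_j}\right].
\end{align}
Next I would use the equidistance assumption on the interfering nodes: all active $d_j$ share the common value fixed by $\delta=d_0/d_j$, so that $d_j^{-\alpha}=d_0^{-\alpha}\delta^{\alpha}$ and the exponent collapses to $\theta d_0^{\alpha}I_j=\theta\delta^{\alpha}\sum_j h_j$, a scaled sum of the $\eta-k$ switched-on exponential gains. Because the $h_j$ are i.i.d.\ and mutually independent, the expectation factorizes into a product of $\eta-k$ identical single-variable Laplace transforms, each evaluated at $s=\theta\delta^{\alpha}$.

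The final step is the elementary identity $\mathbb{E}[e^{-s h_j}]=(1+s)^{-1}$ for an $\mathrm{Exp}(1)$ variable, which immediately yields $\mathbb{E}_{I_j}[e^{-\theta d_0^{\alpha}I_j}]=(1+\theta\delta^{\alpha})^{-(\eta-k)}$ and hence \eqref{eq2}. I do not anticipate a genuine analytical obstacle, as the whole argument is a short conditioning-plus-Laplace-transform computation. The only point demanding care is the index bookkeeping: the summation $\sum_{k+1}^{\eta}$ must be read as the $\eta-k$ RRHs still transmitting after $k$ of them have been silenced, so the exponent is $-(\eta-k)$ and setting $k=0$ correctly recovers the full-interference case with all $\eta$ interferers active.
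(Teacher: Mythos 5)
Your proof is correct, and its skeleton matches the paper's: both condition on the interferers' fading, use the $\mathrm{Exp}(1)$ CDF of $h_0$ to get $1-e^{-\theta\delta^{\alpha}\sum_j h_j}$, and then average out the interference. Where you diverge is in how that average is computed. The paper's Appendix A invokes the fact that $Q=\sum_{j=k+1}^{\eta}h_j$ is gamma distributed with PDF $f_Q(q)=q^{\eta-k-1}e^{-q}/(\eta-k-1)!$ and then evaluates $\int_0^\infty e^{-q\delta^{\alpha}\theta}f_Q(q)\,\mathrm{d}q$ by a change of variables that reduces it to the gamma-function integral $\Gamma(\eta-k)$. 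You instead never identify the distribution of the sum: you treat $\mathbb{E}\big[e^{-\theta\delta^{\alpha}\sum_j h_j}\big]$ as a Laplace transform, factorize it over the $\eta-k$ independent gains, and apply the elementary identity $\mathbb{E}[e^{-sh_j}]=(1+s)^{-1}$, valid here since $s=\theta\delta^{\alpha}\geq 0$. Your route is slightly more economical — it needs only independence and the single-exponential transform, with no distributional lemma and no integral to evaluate — and it sidesteps the notational slips in the paper's integration steps (the stray $(\eta-1)!$ factors). The paper's route, by making the gamma distribution of $Q$ explicit, is the one that generalizes directly to Theorem 2, where the MRT analysis genuinely requires integrating against the gamma PDFs of both the signal and interference sums; your factorization trick works for Theorem 1 precisely because the sum appears only inside a product-form exponential, which is no longer the case there.
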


\begin{proof} Please refer to Appendix~\ref{app1}.
\end{proof}

\subsection{Maximum Ratio Transmission (MRT)}
MRT is the scheme where the typical as well as the cooperating RRHs are jointly coordinated in transmission to UE as Channel State Information (CSI) is already assumed to be available at CRAN.
\begin{theorem}
	The CDF of the $\mathrm{SIR}$ in the case when UE is served through MRT is	\vspace{-8.2mm}
\begin{align}\notag\label{eq14}
F_{\mathrm{SIR}}^{MRT}(\theta)&=\frac{1}{\Gamma(\eta-k)}\theta^k\Bigg(1+\delta^{\alpha}\theta\Bigg)^{-\eta}\Gamma(\eta)\\
\notag&\Bigg(\Big(1+\delta^{\alpha}\theta\Big)^{\eta}{_2F_1} (k,\eta;1+k;-\theta)\\
&-{_2F_1} (k,\eta;1+k;\frac{\Big(-1+\delta^{\alpha}\theta\Big)}{\Big(1+\delta^{\alpha}\theta\Big)}\Bigg),
\end{align}	
where $k=0$ models to case full interference scenario and theorem is valid for $\delta<1$.
\end{theorem}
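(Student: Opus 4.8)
The plan is to write the MRT signal-to-interference ratio explicitly and then reduce the outage probability to a single Euler-type integral that evaluates to the Gaussian (regularized) hypergeometric function. First I would express the coherently combined received signal power under maximum ratio transmission as $h_0 d_0^{-\alpha}+d_j^{-\alpha}\sum_{i=1}^{k}h_i$ (the typical link combined with the $k$ cooperating $\mathrm{RRHs}$ at distance $d_j$) and the residual interference as $d_j^{-\alpha}\sum_{l=1}^{\eta-k}h_l$. Normalising both by $d_0^{-\alpha}$ and writing $G_1=\sum_{i=1}^{k}h_i\sim\Gamma(k,1)$ and $G_2=\sum_{l=1}^{\eta-k}h_l\sim\Gamma(\eta-k,1)$ (sums of i.i.d.\ $\mathrm{Exp}(1)$ variables), the ratio becomes $\mathrm{SIR}^{MRT}=(h_0+\delta^{\alpha}G_1)/(\delta^{\alpha}G_2)$, with $h_0,G_1,G_2$ mutually independent. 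This is the same interference factor $\delta^{\alpha}G_2$ as in Theorem~1, so I expect the case $k=0$ to collapse to the full-interference result there.

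Next I would compute $F_{\mathrm{SIR}}^{MRT}(\theta)=\mathbb{P}(h_0+\delta^{\alpha}G_1<\theta\delta^{\alpha}G_2)$ by conditioning on $G_1,G_2$ and integrating out the exponential $h_0$ first. Since $h_0\sim\mathrm{Exp}(1)$, this gives $\mathbb{E}\big[(1-e^{-\delta^{\alpha}(\theta G_2-G_1)})\,\mathbf{1}(G_1<\theta G_2)\big]$, i.e.\ a double integral of the two Gamma densities over the wedge $\{g_1<\theta g_2\}$. The key manoeuvre is the change of variable $g_1=\theta g_2 t$ with $t\in(0,1)$; the Jacobian $\theta g_2$ together with $g_1^{k-1}=(\theta g_2 t)^{k-1}$ produces exactly the factor $\theta^{k}g_2^{k}$, which is the origin of the $\theta^{k}$ in the statement. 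Performing the now elementary $g_2$-integral against $g_2^{\eta-1}e^{-g_2(\cdots)}$ produces two $\Gamma(\eta)$ terms and collapses everything to the single integral $\int_0^1 t^{k-1}\big[(1+\theta t)^{-\eta}-((1+\delta^{\alpha}\theta)+\theta(1-\delta^{\alpha})t)^{-\eta}\big]\,dt$, up to the prefactor $\theta^{k}\Gamma(\eta)/[\Gamma(k)\Gamma(\eta-k)]$.

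Finally I would evaluate the two pieces with the Euler representation $\int_0^1 t^{k-1}(1-zt)^{-\eta}\,dt=\tfrac{1}{k}\,{_2F_1}(k,\eta;k+1;z)$. The first piece uses $z=-\theta$, yielding the ${_2F_1}(k,\eta;1+k;-\theta)$ term, while for the second I would factor out $(1+\delta^{\alpha}\theta)^{-\eta}$ so the remaining bracket takes the form $(1-z't)^{-\eta}$, furnishing the second hypergeometric term together with the $(1+\delta^{\alpha}\theta)^{-\eta}$ prefactor. Because the parameters force $c=k+1$, the factor $1/\Gamma(k+1)$ coming from $\tfrac1k\cdot\tfrac{1}{\Gamma(k)}$ is absorbed by using the \emph{regularised} $_2F_1$, which reconciles the leading constant $\Gamma(\eta)/\Gamma(\eta-k)$; setting $k=0$ then recovers $1-(1+\delta^{\alpha}\theta)^{-\eta}$, matching Theorem~1 at full interference. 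I expect the main obstacle to be bookkeeping rather than conceptual: keeping the two denominators straight through the substitution so that each maps to its correct hypergeometric argument. It is precisely here that the hypothesis $\delta<1$ enters, guaranteeing $\theta(1-\delta^{\alpha})>0$ so the tilted exponent stays negative and the resulting hypergeometric argument remains within the convergence region.
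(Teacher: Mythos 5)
Your route is the same as the paper's own proof (Appendix B): write the MRT numerator as the coherent sum $h_0 d_0^{-\alpha}+d_j^{-\alpha}\sum_{i=1}^{k}h_i$, condition on the two gamma-distributed sums, integrate out the exponential direct-link gain over the region where its argument is positive, and reduce the resulting double integral to hypergeometric form. The paper compresses the last step into the sentence ``from the definition of the hypergeometric function we obtain \eqref{eq14}''; you actually carry it out, and your execution is sound: the substitution $g_1=\theta g_2 t$, the inner gamma integral producing $\Gamma(\eta)(\cdot)^{-\eta}$, the Euler representation $\int_0^1 t^{k-1}(1-zt)^{-\eta}dt=\tfrac{1}{k}\,{_2F_1}(k,\eta;k+1;z)$, the absorption of $1/\Gamma(k+1)$ into the regularized ${_2F_1}$, and the $k=0$ degeneration to Theorem~1 all check out.

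There is, however, one substantive point you glossed over: the formula your derivation produces is \emph{not} literally the one in the statement. Your second hypergeometric argument is $z'=\theta(\delta^{\alpha}-1)/(1+\delta^{\alpha}\theta)$, whereas the theorem prints $(-1+\delta^{\alpha}\theta)/(1+\delta^{\alpha}\theta)$; these agree only at $\theta=1$, yet you assert that your $z'$ ``furnishes the second hypergeometric term'' of the statement. A spot check settles which is right. For $k=1$, $\eta=2$, a direct computation gives $F_{\mathrm{SIR}}^{MRT}(\theta)=\delta^{\alpha}\theta^{2}/\big((1+\theta)(1+\delta^{\alpha}\theta)\big)$; your formula reproduces this (using ${_2F_1}(1,2;2;z)=(1-z)^{-1}$), while the printed formula evaluates to $\theta/(1+\theta)-\theta/\big(2(1+\delta^{\alpha}\theta)\big)$, which is wrong (e.g.\ $1/6$ versus the true $1/3$ at $\theta=2$, $\delta^{\alpha}=1/2$). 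So the theorem as printed contains a misplaced $\theta$ --- the argument should read $(-1+\delta^{\alpha})\theta/(1+\delta^{\alpha}\theta)$ --- and what you have proved is the corrected statement; you should flag that discrepancy explicitly rather than claim agreement. A minor further quibble: $\delta<1$ is not actually needed for your Euler integral to converge, since $z'<1$ holds for every $\delta>0$ (the inequality reduces to $-\theta<1$); the restriction $\delta<1$ is a modeling assumption (the UE associates with the nearest $\mathrm{RRH}$), not a convergence requirement.
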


\begin{proof} Please refer to Appendix~\ref{app2}.
\end{proof}

\section{Numerical analysis}\label{sc:numerical results}
Numerical analysis and results are presented in this section to evaluate the system performance in terms of reliability for the proposed schemes. In the analysis, we set $\alpha=3.5,\ \delta=0.5$, unless stated otherwise. The topology consists of $\eta=10\ \mathrm{RRHs}$ located away from the UE of interest such that, $\delta\leq 1$. The $k$ represents the number of cooperating $\mathrm{RRHs}$ out of total number of $\mathrm{RRHs}$ in a given area which are cooperating with the $\mathrm{RRH_0}$ in case of MRT, or in silent mode limiting the interference factor $I_j$ in \eqref{eq1}. Herein, we have used Monte Carlo simulation of $10^7$ runs and some schemes not closely match the targeted reliability of $1-10^{-5}=0.99999$ (five 9's) as it requires longer simulation samples. Further work is required so to improve the accuracy on the tail of the distribution.

\begin{figure}[ht!]
	\centering
	\subfigure{\includegraphics[width=0.47\textwidth]{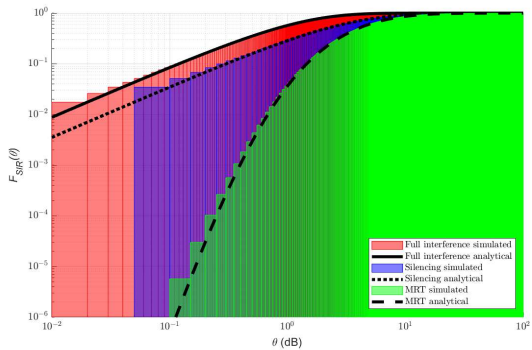}}
	\caption{Overview of SIR distribution for Full interference, Silencing and MRT  schemes with $k=6$ $\mathrm{RRH}$ in coordination and $\eta=10$ where $k=0$ models the case of full interference.}		
	\label{fig2}
	\vspace*{-2mm}
\end{figure}

We compute the CDF of $\mathrm{SIR}$ distribution with respect to threshold $\theta$ (dB) for the proposed schemes as shown in Figure 2. As shown, MRT scheme has left tail distribution already exceeding the value of $10^{-5}$ for same value of $k$ in comparison with silencing scheme. While the left tail of silencing scheme with $k=6 \ \mathrm{RRHs}$ silenced has left tail distribution going greater than $10^{-2}$ in comparison with full interference, but with a reduction in interference factor. The $\mathrm{SIR}$ distribution shows that as value of $k$ increases limiting the interference at UE there is significant improvement at the left tail of the SIR distribution of the analyzed schemes.

 Figure 3 shows the reliability performance of coordinating $k-$out$-\eta$ $\mathrm{RRHs}$ with silencing and MRT schemes. The shaded region in figure represents the ultra-reliable region of operation in case of URLLC which clearly shows that even with $k=4$ $\mathrm{RRHs}$ in cooperation, MRT scheme outperforms all the other schemes. So, the diversity gain from MRT is superior than that of silencing schemes. Although, with more $\mathrm{RRHs}$  in silent mode mitigating interference to UE has significant improvement in reliability. In the figure $k=0$  models the case of full interference. All, the results are validated via Monte Carlo simulations.
 
We generalize the presented topology by comparing all the three schemes in Figure 4 in terms of $\mathrm{SIR}$ threshold $\theta$ (dB). We show that with same value of threshold and distances, MRT scheme easily achieves reliability target of five 9's which is practically infeasible for other schemes. For example, silencing schemes attains reliability target at lower threshold while MRT has some higher threshold for same target reliability. MRT scheme requires prior channel estimation and optimum resource allocation which can be costly at implementation. However, such schemes are being tested in practice \cite{fettweis2}, but not under interference limited constraints considered herein.
\begin{figure}[ht!]
	\centering
	\subfigure{\includegraphics[width=0.47\textwidth]{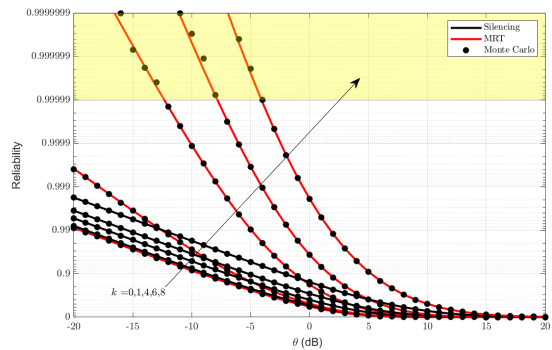}}
	\caption{Overview of Reliability analysis in case for Full interference, Silencing and MRT schemes with different level of coordination of $k$ and $\eta= 10$.}		
	\label{fig3}
	\vspace*{-2mm}
\end{figure}

\subsection{Rate control under reliability constraints}\label{sc:rate_control}

In this section, we evaluate rate control strategies with reliability constraints for the given system model in case of silencing $\mathrm{RRHs}$ and the MRT scheme.
\begin{lemma}
	Assuming silencing scheme, then guaranteeing the reliability constraint given by $\epsilon_{th}$, requires adopting a transmit rate given by
	\begin{align}\label{eq20}
	r=\log_{2}\Bigg(\frac{\epsilon_{th}^{-\frac{1}{(\eta-k)}}-1}{\delta^{\alpha}}+1\Bigg).
	\end{align}
\end{lemma}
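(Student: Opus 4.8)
The plan is to read the reliability directly off Theorem~1 and then invert the resulting monotone relation to isolate the rate. First I would identify the reliability of the typical link under the silencing scheme as the complementary CDF of the $\mathrm{SIR}$, namely $\mathbb{P}(\mathrm{SIR}\geq\theta)=1-F_\mathrm{SIR}^S(\theta)$. Substituting the closed form in \eqref{eq2}, this reduces to $\left(1+\delta^{\alpha}\theta\right)^{-(\eta-k)}$, so the reliability constraint (requiring the success probability to be at least $\epsilon_{th}$) becomes
\[
\left(1+\delta^{\alpha}\theta\right)^{-(\eta-k)}\geq\epsilon_{th}.
\]

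Next I would note that, since $\delta^{\alpha}\theta\geq 0$ and $\eta-k>0$, the left-hand side is strictly decreasing in $\theta$; hence the largest threshold, and therefore the largest supportable rate, is achieved by taking the constraint with equality. Solving $\left(1+\delta^{\alpha}\theta\right)^{-(\eta-k)}=\epsilon_{th}$ for $\theta$ yields $1+\delta^{\alpha}\theta=\epsilon_{th}^{-1/(\eta-k)}$, that is, $\theta=\bigl(\epsilon_{th}^{-1/(\eta-k)}-1\bigr)/\delta^{\alpha}$.

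Finally I would invoke the rate--threshold relation $\theta=2^{r}-1$ stated at the start of Section~\ref{sc:reliability diversity}, so that $2^{r}=1+\theta$, and take $\log_{2}$ of both sides to recover \eqref{eq20} exactly.

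The derivation contains no genuine obstacle beyond the algebraic inversion; the only points demanding care are verifying that $\eta-k>0$ so that the fractional power $\epsilon_{th}^{-1/(\eta-k)}$ is well defined, and justifying through the monotonicity argument that the constraint is active (met with equality) at the rate-maximizing operating point, since otherwise the lemma would deliver only an inequality rather than the stated closed form. A secondary subtlety is simply bookkeeping the interpretation of $\epsilon_{th}$ as the target success probability, so that it enters the expression directly rather than through its complement $1-\epsilon_{th}$.
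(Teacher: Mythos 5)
Your proposal is correct and follows the same basic route as the paper: take the closed form of Theorem~1 in \eqref{eq2}, make the reliability constraint active, solve for $\theta$, and convert to rate via $\theta=2^{r}-1$. The one substantive difference is in which quantity gets equated to $\epsilon_{th}$. The paper's proof literally solves $F_\mathrm{SIR}^S(\theta)=\epsilon_{th}$, i.e., it sets the \emph{outage} probability equal to $\epsilon_{th}$; carried out, that gives $\theta=\bigl((1-\epsilon_{th})^{-1/(\eta-k)}-1\bigr)/\delta^{\alpha}$ and hence a rate formula containing $(1-\epsilon_{th})^{-1/(\eta-k)}$, not the $\epsilon_{th}^{-1/(\eta-k)}$ appearing in \eqref{eq20}. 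You instead solve $1-F_\mathrm{SIR}^S(\theta)=\epsilon_{th}$, treating $\epsilon_{th}$ as the target \emph{success} probability, which is exactly what reproduces \eqref{eq20} as stated. So your bookkeeping is the one consistent with the lemma's formula, and it resolves an internal inconsistency in the paper: elsewhere the text writes the constraint as $\mathrm{Reliability}\geq 1-\epsilon_{th}$, under which \eqref{eq20} should contain $(1-\epsilon_{th})^{-1/(\eta-k)}$. The two conventions are of course equivalent under the swap $\epsilon_{th}\leftrightarrow 1-\epsilon_{th}$, but one must be chosen and stated, which you did and the paper did not. Your explicit monotonicity argument, showing the constraint is active at the rate-maximizing point (so the lemma yields an equality rather than a bound), is a small but genuine improvement over the paper's one-line proof, which asserts the inversion without comment.
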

In case $k=0$, \eqref{eq20} models the full interference scenario. 
\begin{proof}
	 Note that \eqref{eq20} is final closed-form analytical solution for rate control analysis which comes directly after solving $F_\mathrm{SIR}^S(\theta)=\epsilon_{th}$, for $\theta$ using \eqref{eq2}, where $\theta=2^{r}-1$.
\end{proof}
However, in case for MRT scheme where $F_{\mathrm{SIR}}^{MRT}=\epsilon_{th}$ should be used to calculate the rate from \eqref{eq14}, it is difficult to simplify the equation and invert the term ${_2F_1} (k,\eta;1+k;-\theta)$ and ${_2F_1} \Bigg(k,\eta;1+k;\tfrac{-1+\delta^{\alpha}\theta}{1+\delta^{\alpha}\theta}\Bigg)$ as there is not any standard integral with these hypergeometric functions. In order to evaluate rate analysis we proceed solving numerically
\begin{align}\label{mrt_rate}
\underset{r>0}{\text{arg max}} \   F_{\mathrm {SIR}}^{MRT}(\theta)=\epsilon_{th}.
\end{align}

We evaluate the transmission rate for the given target reliability constraints ($\epsilon_{th}$) for silencing and MRT schemes in Figure 5. In the rate analysis, silencing scheme is evaluated from \eqref{eq20} while MRT scheme is done through numerical analysis from \eqref{mrt_rate}. We show that for given target reliability constraint as number of $k$ increases there is significant improvement over the rate in both cases. MRT allows a significantly higher rate for achieving given target reliability constraint. It means that limiting the interference either from silencing or MRT schemes can enhance the rate for given reliability constraints. Obviously for $k=0$, both the scheme have same rate values.

The cooperation of $\mathrm{RRHs}$  in the vicinity of interference limited network leads to the increase in reliability. However, there is a fundamental question that  should be answered: what is the minimum number of $\mathrm{RRHs}$  needed to cooperate to achieve ultra-reliable communication and have optimum allocation of the resources? For this analysis, we used the problem by formulating the argument based on CDF from \eqref{eq2} and \eqref{eq14} as subjected to case with constrained $\mathrm{Reliability \geq 1-\epsilon_{th}}$.

\begin{figure}[t!] 
	\centering
	\subfigure{\includegraphics[width=0.47\textwidth]{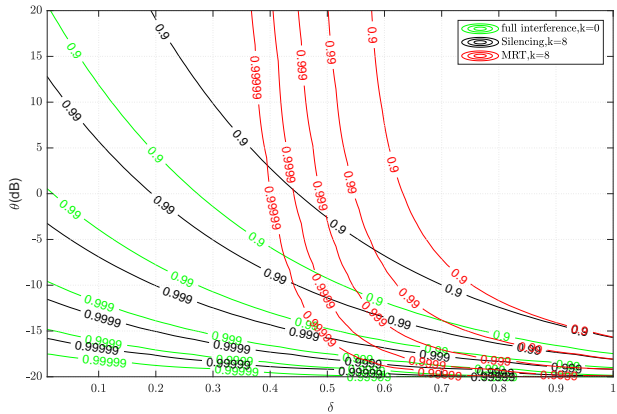}}
	\caption{Illustration of reliability analysis for full interference, Silencing and MRT schemes with respect to $\delta$ and threshold $\theta$ with $\eta=10$ and $k=8$ coordinating $\mathrm{RRHs}$.}
	\vspace*{-2mm}	
	\label{fig4}
\end{figure}

\begin{figure}[t!]
	\centering
	\subfigure{\includegraphics[width=0.47\textwidth]{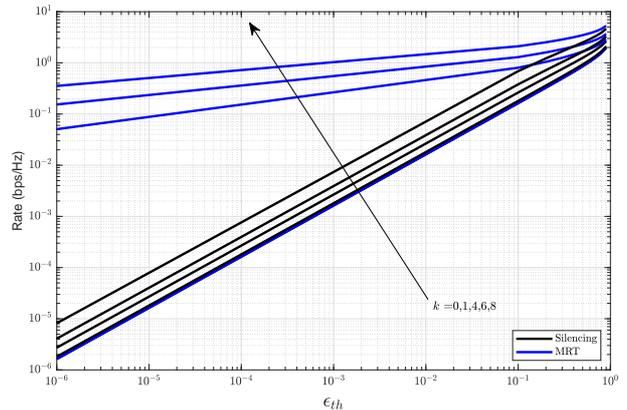}}
	\caption{Illustration of rate control analysis with reliability constraint ($\epsilon_{th}$) in case for Silencing and MRT  schemes with $\eta= 10$.}
	\vspace{-2mm}	
	\label{fig5}
\end{figure}

Figure 6 shows the clear outlook of the optimum number of cooperating $\mathrm{RRHs}$  to meet the given reliability constraints. In the analysis we used the total number of $\mathrm{RRHs}$  in the given area to be $\eta=10$ and we used the threshold $\theta=0.3$ dB. The analysis shows that with MRT scheme even with minimum cooperation of $\mathrm{RRHs}$  can achieve the higher reliability constrains $\epsilon_{th}$ in comparison to silencing schemes. In case of reliability of five 9's which is considered as ultra-reliable operating region only ${k_{min}}=8$ out of 10  in case of MRT can satisfy the target relaibilty but this number can be higher in case of silencing scheme. The increase in threshold $\theta$ can also lead to increase $k$, which requires more resources at CRAN and $\mathrm{RRHs}$.
\vspace{3.5mm}

\begin{figure}[ht!] 
	\centering
	\subfigure{\includegraphics[width=0.47\textwidth]{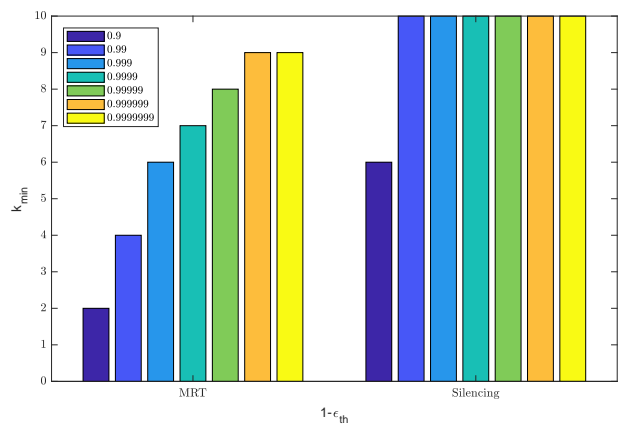}}
	\caption{Overview of minimum required cooperation for Silencing and MRT schemes with the $\eta=10$.}		
	\label{fig6}
	\vspace*{-2mm}	
\end{figure}

\section{conclusion}\label{sc:conclusion}
In this paper, we proposed spatial diversity and multi-connectivity schemes for a downlink cellular system to achive ultra-reliable communication. The performance depends on the transmit rate, distance to the UE, path loss exponent and number of $\mathrm{RRHs}$ in cooperation and interfering. We provide numerical results and discussion by showing outage probability and reliability analysis of the schemes when varying different system parameters. The numerical results show the performance of MRT scheme and also the feasibility of ultra-reliable operations when the number of cooperating $\mathrm{RRHs}$  increases. In case of moderate reliability, silencing scheme is also  feasible. We reached accurate closed-form solution for the $\mathrm{SIR}$ distribution for full interference, silencing and MRT schemes. Finally, our analytical closed-form results are corroborated via Monte Carlo solutions.

\section*{Acknowledgement}
This research has been financially supported by Academy of Finland 6Genesis Flagship (grant 318927), Academy of Finland (Aka) (Grants n.319008, n.307492) and by the Finnish Funding Agency for Technology and Innovation (Tekes), Bittium Wireless, Keysight Technologies Finland, Kyynel, MediaTek Wireless, Nokia Solution and Networks.

\appendices\vspace{-2mm}
\section{Proof of Theorem 1}\label{app1}
In case of silencing some RRHs, interference reduces to $I_j=\sum\limits_{j=k+1}^{\eta}h_jd_j^{-\alpha}$ then we proceed as follows from \eqref{eq1} 		   
\begin{align}\notag\label{eq3}
F_\mathrm{SIR}^S(\theta|h_j)&=\mathbb{P}\Big(\mathrm{SIR}<\theta\Big)=\mathbb{P}\Big(h_0<\theta\delta^{\alpha}\sum\limits_{j=k+1}^{\eta}h_j\Big)\\  &=1-\exp\Big(-\theta\delta^{\alpha}\sum\limits_{j=k+1}^{\eta}h_j\Big)
\end{align} 
where the last step comes from using the CDF of $h_0$. Next, we decondition \eqref{eq3} of $h_j$ using the fact that $\mathrm{Q}=\sum\limits_{j=k+1}^{\eta}h_j$ follows a  gamma distribution \cite{gamma1} with PDF,  $\mathrm{f_Q}(\mathrm{q})=\frac{\mathrm{q}^{(\eta -k-1)}\mathrm{e}^{-\mathrm{q}}}{(\eta -1)!}$. Thus we proceed to calculate CDF as
\begin{align}\notag\label{eq5}
F_\mathrm{SIR}^S(\theta)&=\int\limits_{0}^{\infty}\ F_\mathrm{SIR}(\theta|h_j)\mathrm{f_Q}(\mathrm{q})\mathrm{dq}\\
\notag&\stackrel{(a)}{=}\!1\!-\!\int\limits_{\!0\!}^{\!\infty\!}\!\Bigg(\!\!\exp\bigg(\!-\mathrm{q}\delta^{\alpha}\theta\!\bigg)\!\Bigg)\frac{\mathrm{q}^{\eta-k-1}}{(\!\eta-k-1\!)!}\mathrm{dq}\\
\notag&\stackrel{(b)}{=}\!1\!-\Bigg(\frac{1}{{\!1+\delta^{\alpha}\theta}}\!\Bigg)^{\eta-k}\frac{1}{(\!\eta\! -\!1\!)!}\int\limits_{0}^{\infty}\mathrm{e}^{\!-x\!}x^{\!\eta-1\!}dx\\
&\stackrel{(c)}{=}1-\Big(1+\delta^{\alpha}\theta\Big)^{-(\eta-k)}	
\end{align}
where (a) comes from substituting the respective CDFs, the integral in (b) is the definition of gamma function \cite{gamma1} and reduces $\Gamma(\eta)=(\eta-1)!$ since $\eta\in\mathbb{N^*}$, which simplifying renders (c), concluding the proof.

\section{Proof of Theorem 2}\label{app2}
 Since $\mathrm{RRH_0}$ cooperates with $k \ \mathrm{RRHs}$ the $\mathrm{SIR}$ in \eqref{eq1} becomes
\begin{align}\label{sir_mrt}
\mathrm{SIR}^{MRT}=\frac{h_0d_0^{-\alpha}+\sum\limits_{i=1}^{k}h_id^{-\alpha}}{\sum\limits_{j=k+1}^{\eta}h_jd^{-\alpha}},
\end{align} whose CDF is
\begin{align}\notag\label{eq15}
&F_{\mathrm{SIR}}^{MRT}(\theta)=\mathbb P\Bigg({\mathrm{SIR}^{MRT}}<\theta\Bigg)\stackrel{(a)}{=}\mathbb{P}\Bigg(h_0<\delta^{\alpha}(\theta\mathrm{q-p})\Bigg)\\
&\stackrel{(b)}{=}\!\!\int\limits_{\frac{\!\!\mathrm{p}}{2^r-1}\!\!}^{\infty}\!\int\limits_{0}^{\infty}\!\!\Bigg(\!1\!-\!\exp\!\Big(\!\!{-\delta^{\alpha}\!\big(\theta\mathrm{p-q}\big)\!\Big)\!}\!\Bigg)\!\mathrm{f_P}(\mathrm{p})\mathrm{f_Q}(\mathrm{q})\ \mathrm{dp}\ \mathrm{dq},
\end{align}
where (a) asssumes that $d=d_j$, $\mathrm{P}=\sum\limits_{i=1}^{k}h_i\sim\Gamma (k,\tfrac{1}{k})$ and
$\mathrm{Q}\sim\Gamma(\eta-k,\tfrac{1}{\eta-k})$, the integral in (b) comes after applying the CDFs of $h_0$ and de-conditioning on $\mathrm{p}$ and $\mathrm{q}$. From definition of hypergeometric function \cite{mathematica} we obtain \eqref{eq14}. Note that \eqref{eq14} is our final closed-form solution for $\delta<1$, while $\delta>1$ is neglected since the UE connects to the closest $\mathrm{RRH}$.  

\bibliographystyle{IEEEtran} 
\bibliography{refs}

\end{document}